\newtheorem{prop}{Proposition}
\begin{document}

%
%
\title{Interactive Proofs of Quantumness via Mid-Circuit Measurements}

\author{Daiwei Zhu$^{1,2,9,12\dagger\ast}$, Gregory D. Kahanamoku-Meyer$^{3,4\dagger}$, Laura Lewis$^{5,6}$, Crystal Noel$^{1,7,8}$, Or Katz$^{7,8}$, Bahaa Harraz$^{1}$, Qingfeng Wang$^{1,2,11}$, Andrew Risinger$^{1,2}$, Lei Feng$^{1,2}$, Debopriyo Biswas$^{1,2}$, Laird Egan$^{1,2}$, Alexandru Gheorghiu$^{5,10}$, Yunseong Nam$^{9}$, Thomas Vidick$^{5}$, Umesh Vazirani$^{3,4}$, Norman Y.~Yao$^{3,4}$, Marko Cetina$^{1,7}$, Christopher Monroe$^{1,2,7,8,9}$ \\
\vspace{0.2in}
\normalsize{$^{1}$Joint Quantum Institute, Departments of Physics and Electrical and Computer Engineering, University of Maryland, College Park, MD 20742, USA}\\
\normalsize{$^{2}$Joint Center for Quantum Information and Computer Science, NIST/University of Maryland, College Park, MD 20742, USA}\\
\normalsize{$^{3}$Department of Physics, University of California, Berkeley, CA 94720, USA}\\
\normalsize{$^{4}$Materials Sciences Division, Lawrence Berkeley National Laboratory, Berkeley, CA 94720, USA}\\
\normalsize{$^{5}$Department of Computing and Mathematical Sciences, California Institute of Technology, CA 91125, USA}\\
\normalsize{$^{6}$Division of Physics, Mathematics, and Astronomy, California Institute of Technology, CA 91125-0001, USA}\\
\normalsize{$^{7}$Duke Quantum Center and Department of Physics, Duke University, Durham, NC 27708, USA}\\
\normalsize{$^{8}$Department of Electrical and Computer Engineering, Duke University, Durham, NC 27708, USA}\\
\normalsize{$^{9}$IonQ, Inc., College Park, MD  20740, USA}\\
\normalsize{$^{10}$Institute for Theoretical Studies, ETH Z{\"u}rich, CH 8001, Switzerland}\\
\normalsize{$^{11}$Chemical Physics Program and Institute for Physical Science and Technology, University of Maryland, College
Park, MD 20742, USA}
\normalsize{$^{12}$Departments of Electrical and Computer Engineering, University of Maryland, College Park, MD 20742, USA}\\
\normalsize{$^\ast$To whom correspondence should be addressed. E-mail:  daiwei@terpmail.umd.edu.}\\
\normalsize{$^\dagger$These authors contributed equally.}}

\date{\today}

\maketitle 

\textbf{Interaction is a powerful resource for quantum computation. It can be utilized in applications ranging from the verification of quantum algorithms all the way to verifying quantum mechanics itself. Here, we present the first implementation of interactive protocols for proofs of quantumness --- which when suitably scaled promise the efficient verification of quantum computational advantage. The key feature that distinguishes these protocols from existing demonstrations of quantum advantage is that the classical verification is efficient and scales polynomially rather than exponentially with the number of qubits. The experimental implementation of such interactive protocols requires the ability to independently measure subsets of qubits in the middle of a quantum circuit and to continue coherent evolution afterwards. This is achieved by spatially isolating target qubits via shuttling, and opens the door to a range of quantum interactive protocols as well as new information processing architectures.  }

To date, the field of experimental quantum computation has largely operated in a non-interactive paradigm, where classical data is extracted from the computation only at the very last step.
While this has led to many exciting advances, it has also become clear that 
in practice, interactivity---made possible by mid-circuit measurements performed on the quantum device---will be crucial to the operation of useful quantum computers.
For example, within quantum error correction, mid-circuit measurements are used to project the state onto a single error syndrome, which can then be corrected~\cite{nielsen_quantum_2010, ryananderson2021realization}.
Certain quantum machine learning algorithms also leverage mid-circuit measurements to introduce essential non-linearities~\cite{cong_quantum_2019}.
Recent work has shown that interaction can do much more: it has emerged as an indispensable tool for  verifying the behavior of untrusted quantum devices~\cite{mahadev2018classical,brakerski2018cryptographic,alex2019computationallysecure}, and even for testing the fundamentals of quantum mechanics itself~\cite{aharonov_interactive_2010}.


Ultimately, any quantum computer is an interaction between a classical machine and a much more powerful quantum system with its exponentially large Hilbert space.
Intuitively this asymmetry poses an insurmountable barrier for the classical machine's ability to certify the behavior of an untrusted quantum device. 
This challenge shadows one pursued in the field of classical computing, which asks whether a skeptical, computationally-bounded ``verifier,'' who is not powerful enough to validate a given statement on their own, can be convinced of its veracity by a more powerful but untrusted ``prover.''
Several decades ago, the field of classical complexity theory began to pursue this idea through a novel tool called an \emph{interactive proof}.
In these protocols, the verifier's goal is to accept only valid statements, regardless of whether the prover behaves honestly or attempts to cheat.
The crown jewel of computational complexity theory is a set of results showing that in certain scenarios multiple rounds of interaction allow the verifier to detect cheating by even \emph{arbitrarily computationally powerful} provers~\cite{goldwasser_knowledge_1989, lund_algebraic_1990, shamir_ip_1992}.
The key idea is that interaction can force the prover to \emph{commit} to some data early in the protocol, upon which the verifier follows up with queries that can only be answered consistently if the prover is being truthful.
In exciting recent developments, success has been found in the application of this idea to quantum computing: interactive proofs have been shown to allow the verification of a number of practical quantum tasks, including random number generation,~\cite{brakerski2018cryptographic} remote quantum state preparation,~\cite{alex2019computationallysecure} and  the delegation of  computations to an untrusted quantum server.\cite{mahadev2018classical}
Connecting to seminal recent sampling experiments~\cite{}, perhaps
the most direct application of an interactive protocol is for a ``proof of quantumness''---the classically-verifiable demonstration of non-classical behavior from a single quantum
device.




In practice, the experimental implementation of interactivity is extremely challenging.
It requires the ability to independently measure subsets of qubits in the middle of a quantum circuit and to continue coherent evolution afterwards. 
Unfortunately, the measurement of a target qubit typically disturbs neighboring qubits, degrading the quality of computations following the mid-circuit measurement. 
%
One solution, which finds commonality among multiple quantum platforms is to spatially isolate target qubits via shuttling~\cite{hensinger2021quantum,bluvstein2022quantum,pino2021demonstration}. 
While daunting from the perspective of quantum control, experimental progress toward coherent qubit shuttling opens the door not only to interactivity but also to distinct information processing architectures~\cite{kielpinski2002architecture}.
\begin{figure}
\centering
 \includegraphics[width=0.6\textwidth]{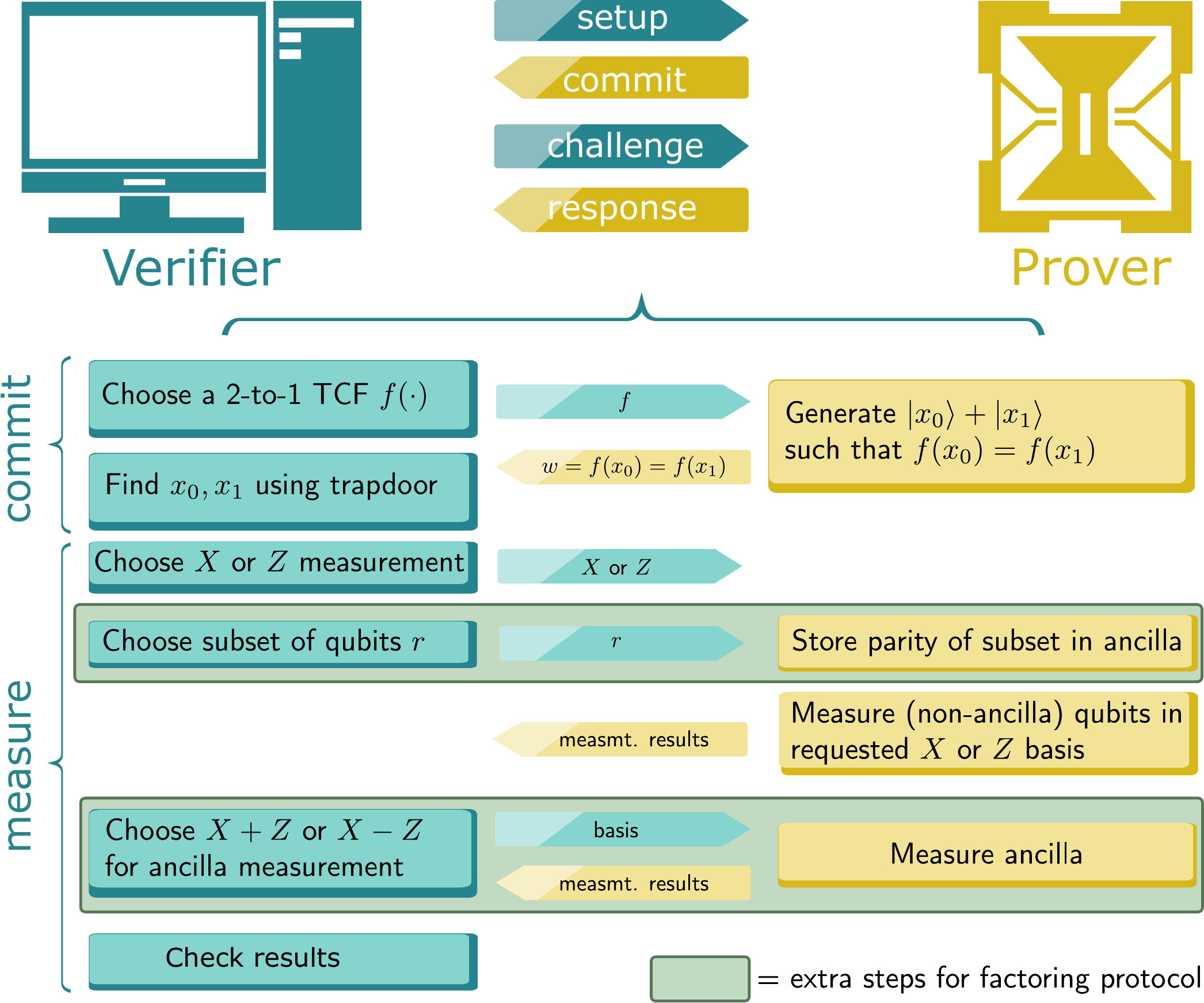}
 \caption{Schematic of an interactive quantum verification protocol. The verifier's goal is to test the ``quantumness" of the prover through an exchange of classical information.
 The protocol begins with the verifier sending the prover an instance of a trapdoor claw-free function. 
 By applying this function to a superposition of all possible inputs and projectively measuring the result, the prover commits to a particular quantum state $\ket{x_0} + \ket{x_1}$.
Subsequent challenges issued by the verifier specify how to measure this state and enable the efficient validation the prover's commitment.
 The LWE-based protocol requires two rounds of interaction, while the factoring-based protocol requires an additional round (green box).}
 \label{fig:interaction}
\end{figure}

In this work, we implement two complementary interactive proofs of quantumness, shown
 in the schematic of Fig.~\ref{fig:interaction}, on an ion trap quantum computer with up to 11 qubits and 145 gates.
The interactions between verifier and prover are enabled by the experimental realization of mid-circuit measurements on a portion of the qubits (Fig.~\ref{fig:shuttling})~\cite{ryananderson2021realization,pino2021demonstration,wan2019quantum}.
The first protocol involves two rounds of interaction and is based upon the learning with errors (LWE) problem \cite{regev2009lattices, regev2010learning}.
The LWE construction is unique because it exhibits a property known as the ``adaptive hardcore bit'' \cite{brakerski2018cryptographic}, which enables a particularly simple measurement scheme.
%
The second protocol circumvents the need for this special property and thus applies to a more general class of cryptographic functions; here we use a function from the Rabin cryptosystem \cite{rabin_digitalized_1979,goldwasser1988digital}. 
By using an additional round of interaction, the cryptographic information is condensed onto the state of a single qubit.
This makes it possible to implement an interactive proof of quantumness whose hardness is equivalent to that of factoring, but whose associated circuits can exhibit an asymptotic scaling much simpler than Shor's algorithm, with the number of required gates that is almost linear in the problem size. 

\section{Trapdoor Claw-free functions}

Both interactive protocols (Fig.~\ref{fig:interaction}) rely upon a cryptographic primitive called a \emph{trapdoor claw-free function} (TCF)~\cite{GoldwasserMR84}---a 2-to-1 function $f$ for which it is cryptographically hard to find two inputs mapping to the same output (i.e.~a ``claw'').
The function also has a ``trapdoor,'' a secret key with which it is easy to compute the pre-images $x_0$ and $x_1$ from any image $w = f(x_0) = f(x_1)$.
The key intuition behind the protocols is the following: Despite the claw-free property, a quantum computer can efficiently generate a \emph{superposition} of two pre-images that form a claw; this is most simply realized by evaluating $f$ on a superposition of the entire domain, and then collapsing to a single image, $w$, via measurement.
In this way, a quantum prover can generate the state $\ket{\psi} = (\ket{x_0}+\ket{x_1})\ket{w}$, where $w$ is the measurement result.
Note that the value of $w$ changes each time the protocol is executed,  making it impossible to learn any additional information by repeating the protocol.
After receiving $w$ from the prover, the verifier can use the trapdoor to compute $x_0$ and $x_1$, thus giving the verifier full knowledge of the prover's quantum state.
The verifier then asks the prover to measure $\ket{\psi}$.
In particular, they  request either a standard basis measurement (yielding $x_0$ or $x_1$ in full), or a measurement that \emph{interferes} the states $\ket{x_0}$ and $\ket{x_1}$.
In both cases, the verifier checks the measurement result on a per-shot basis.

\emph{The learning with errors problem}---It is believed to be classically intractable to recover an input vector from the result of certain noisy matrix-vector multiplications---this constitutes the LWE problem~\cite{regev2009lattices, regev2010learning}.  
In particular, a secret vector, $s \in \{0,1\}^n$, can be encoded into an output vector, $y = As + e$, where  $A \in \mathbb{Z}^{m \times n}_q$ is a matrix and $e$ is an error  vector corresponding to the noise.
Using the LWE problem, a TCF can be constructed as $f(b, x) = \lfloor Ax + b \cdot y \rceil$, where $b$ is a single bit that controls whether $y$ gets added to $Ax$ and $\lfloor \cdot \rceil$ denotes a rounding operation~\cite{banerjee2012pseudorandom, alwen2013learning} (see Supplementary Information for additional details). Here, $s$ and $e$ play the role of the trapdoor, and a claw corresponds to colliding inputs $\{(0,x_0),(1,x_1)\}$ with $f(0,x_0) = f(1,x_1)$ and $x_0 = x_1 +s$.
By  implementing the protocol described above and illustrated in Figure~\ref{fig:interaction}, the prover is able to generate the state $\ket{\psi} = (\ket{0,x_0}+\ket{1,x_1})\ket{w}$.
For the aforementioned ``interference'' measurement, the prover simply measures each qubit of the superposition in the $X$ basis.
Crucially, the result of this measurement is cryptographically protected by the adaptive hardcore bit property \cite{brakerski2018cryptographic}.

\emph{Rabin's function}---The function, $f(x) = x^2 \bmod N$, with $N$ being the product of two primes, was originally introduced in the context of digital signatures~\cite{rabin_digitalized_1979, goldwasser1988digital}. %
This function has the property that finding two colliding pre-images in the range $[0,N/2]$ is as hard as factoring $N$.
Moreover, the prime decomposition $N=pq$ can serve as a trapdoor, enabling one to invert the function for any output.
Thus, $f(x)$ is a trapdoor claw-free function.
However, $f(x)$ does not have the adaptive hardcore bit property, making the simple $X$-basis ``interference'' measurement (described in the LWE context above)  not provably secure. 
To get around this, we perform the ``interference'' measurement differently.
First, the verifier chooses a random subset of the qubits of the superposition, and the prover stores the parity of that subset on an ancilla.
Then, the prover measures everything except the ancilla in the $X$ basis; the polarization of this remaining ancilla qubit is  cryptographically protected, in the same way that the state of one half of a Bell pair is protected (i.e.~by ``no communication'') when the other half is measured.  
Following this intuition, the verifier requests a measurement of the ancilla qubit in the $Z+X$ or $Z-X$ basis, effectively completing the Bell test~\cite{bell_einstein_1964, clauser1969proposed}; the verifier accepts if the prover returns the more likely measurement outcome.
Crucially, the  dependence of the measurement result on the claw renders it  infeasible to guess classically  \cite{kahanamoku2021classically}.

\begin{figure}
\centering
 \includegraphics[width=0.8\textwidth]{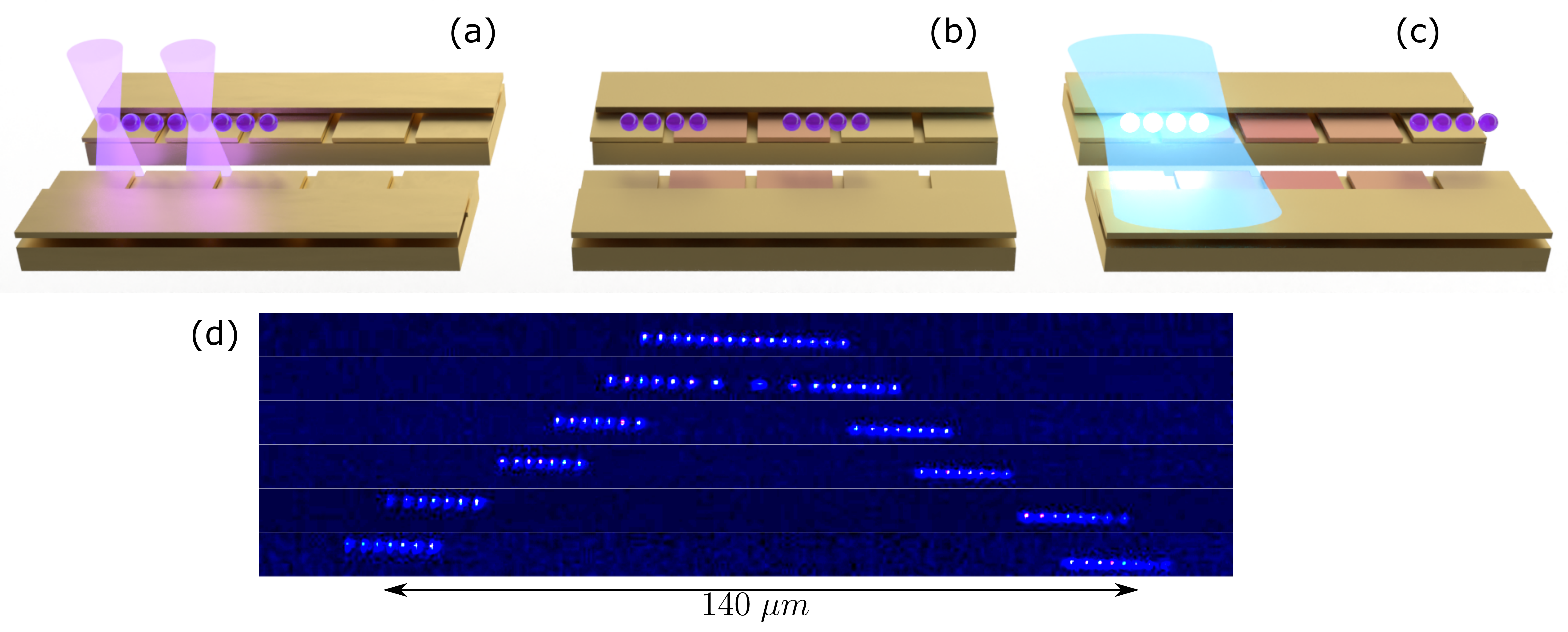}
 \caption{(a-c) Schematic illustration of our mid-circuit measurement protocol. (a) To start, the ions are closely-spaced in a 1D chain above a surface  trap. 
 Coherent gates are implemented via a combination of individual addressing beams (purple) and global beams (not shown). 
 (b) By tuning the electrodes of the surface trap, one can adjust the potential to deterministically split the ion chain.
 Depending on the protocol, we split the chain into either two or three individual segments. 
 We optimize the rate of shuttling to minimize the perturbation of the motional state. 
 (c) Once the segments are sufficiently far away from one another, it is possible to measure (blue beam) an individual segment without disturbing the coherence of the remaining ions. 
 After the measurement, the shuttling is reversed and the ion chain is recombined.
 (d) Fluorescence image of an example shuttling protocol for a chain of $N=15$ ions. At the start, the average spacing between ions is $\sim4 \mu$m. At the end of the splitting procedure, the distance between the two segments is $\sim550\mu$m um. Shown is the splitting up to a distance of $\sim140\mu$m until the two sub-chains reach the edge of the detection beam.
 }\label{fig:shuttling}
\end{figure}

\begin{figure}[t!]
\centering
 \includegraphics[width=\textwidth]{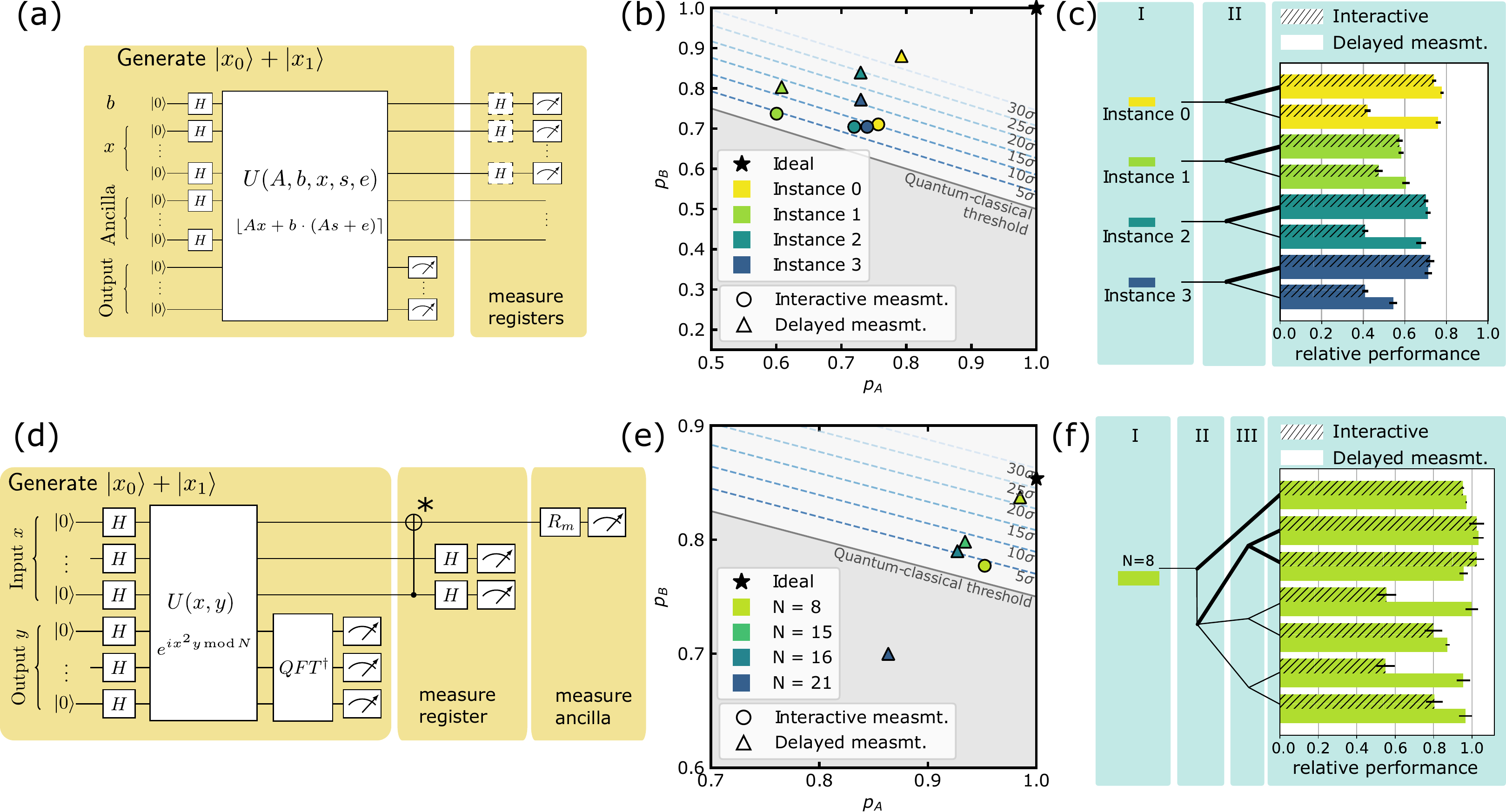}
 \caption{(a),(d) depict the circuit diagrams for the LWE- and factoring-based protocols, respectively.
 Details about the implementation of $U(A,b,x,y)$ and $U(x,y)$ are provided in the supplementary information.
 In (d), the CNOT gate marked with an asterisk stands for the operations needed to store the parity of selected qubits into the ancilla. 
 To reduce the impact of shuttling-induced gate fidelity degradation, we compute the parity for all of the verifier's possible selections and then choose the relevant one once the prover receives the challenge.
 (b),(e): Experimentally measured probabilities of passing the standard-basis ($p_A$) and interference measurement ($p_B$) challenges for the LWE- and factoring-based protocols. These probabilities are compared against the asymptotic classical limits ($p_A+2p_B \leq 2$ for LWE, derived in the supplementary information section~\ref{suppinfo:threshold}, and $p_A+4p_B \leq 4$ for factoring~\cite{kahanamoku2021classically}).
 Results for both interactive and delayed-measurement version of the protocols are presented. 
 Numerical values of $p_A$ and $p_B$ for each experiment, and the corresponding values of statistical significance, are provided in the supplemental information.
 (c),(f): The relative performance, $R$, of the experiments for all possible branches. Certain branches (thick lines) are robust to phase errors and exhibit similar performance for both interactive and delayed-measurement protocols. 
 }\label{fig:all_combined}
\end{figure}

\section{Implementing an interactive proof of quantumness}

In order to implement an interactive proof of quantumness, we design quantum circuits for both the LWE- and factoring-based protocols. %
 The high-level circuit diagrams are shown in Figs.~\ref{fig:all_combined}(a,b). 
 In both cases, the circuits are composed of several sections. First, the prover creates a uniform superposition $\ket{\psi} = \sum_{x=0}^{2^n-1} \ket{x}$ via  Hadamard gates, where $n$ is the number of input qubits.
 Then, they compute the TCF on an output register using this superposition as input [Fig.~\ref{fig:all_combined}(a,d)], thereby generating the state $\ket{\psi} = \sum_x \ket{x}  \ket{f(x)}$.
 Next, the prover performs a mid-circuit measurement  on the output register, collapsing the state to $\ket{\psi} =  \left (\ket{x_0} + \ket{x_1} \right )  \ket{w}$.
 Finally, based on the verifier's choice of measurement scheme (i.e.~standard vs.~interference), the prover must perform additional coherent gates and measurements (see Methods for a full description of the quantum circuits used). 

We implement both interactive protocols using an ion trap quantum computer, with a base chain length of 15 ions (Fig.~\ref{fig:shuttling}); for each $^{171}$Yb$^{+}$ ion, a qubit is encoded in a pair of hyperfine levels~\cite{monroe2021programmable}.
The quantum circuits are implemented via the consecutive application of native single and two-qubit gates using individual optical addressing [Fig.~\ref{fig:shuttling}(a)]~\cite{egan2020fault}. 
In order to realize rapid successive two-qubit interactions, we position the ions in a single, closely-spaced linear chain [Fig.~\ref{fig:shuttling}(d)]. 

This geometry makes it challenging to implement
mid-circuit measurements: light scattered from nearby ions during a  state-dependent fluorescence measurement can destroy the  state of the other ions. 
To overcome this issue, we vary the voltages on the trap electrodes to split and shuttle the ion chain, thereby spatially isolating the ions not being measured (Fig.~\ref{fig:shuttling}a-c).
Depending on the protocol, the ion chain is split into either two or three segments.
To measure the ions in a particular segment, we re-shape the electric potential to align the target segment with the detection system.
In addition, we calibrate and correct for spatial drifts of the optical beams, variations of stray fields, and unwanted phase accumulation during shuttling (see supplementary information sections~\ref{suppinfo:ionqc},~\ref{suppinfo:shuttling} for additional details). 

In this demonstration, the qubits play the role of the prover and the classical control system plays the role of the verifier.
This allows us to compile the decisions of the verifier into the classical controller prior to execution of the quantum circuit.

\section{Beating the classical threshold}
\label{sec:threshold}

Much like a Bell test, even a classical prover can pass the verifier's challenges with finite  probability.
Assuming that the trapdoor claw-free function is secure, this probability can be bounded by an asymptotic ``classical threshold''---which a quantum prover must exceed to demonstrate  advantage.
For both protocols, this threshold is best expressed in terms of the probabilities of passing the verifier's ``standard basis'' and ``interference'' checks, which we denote as $p_A$ and $p_B$, respectively (see supplementary material Section~\ref{suppinfo:verifier-checks} for the definition of the verifier's checks).
For the LWE-based protocol, the classical threshold is given by $p_A + 2p_B - 2 \leq \epsilon$ (derivation in Supplementary Information); for the factoring-based protocol, it is given by  $p_A + 4p_B - 4 \leq \epsilon$.~\cite{kahanamoku2021classically}
In both cases, $\epsilon$ is a function which goes to zero exponentially in the problem size.
An intuition for the difference between the thresholds is that the factoring-based protocol requires an additional round of interaction during the ``interference'' test.

As depicted in Figure~\ref{fig:all_combined}(b), we perform multiple instances of the LWE-based protocol for different matrices $A$ and noise vectors $e$.
For each of the verifier's possible choices, we repeat the experiment $\sim10^3$ times to collect statistics.  
This yields the experimental probabilities $p_A$ and $p_B$, allowing us to confirm that the quantum prover exceeds the asymptotic classical threshold in all cases.
The statistical significance by which the bound is exceeded (more than $6\sigma$ in all cases, see Table~\ref{tab:lwe_result} in the supplementary information) is shown in Figure 3(b). 
Figure~\ref{fig:all_combined}(e) depicts the analogous results for the factoring-based protocol, where the different instances correspond to different values of $N$.
For all but $N=21$, which requires the deepest circuit, the results exceed the asymptotic classical bound with more than $4\sigma$ statistical significance.
We utilize an error-mitigation strategy based on excluding iterations where $w$ is measured to be invalid, i.e.~not in the range of $f$ (see~\ref{suppinfo:postselection}); effectively, this implements a post-selection which suppresses  bit-flip errors~\cite{kahanamoku2021classically}. 

To further analyze the performance of each branch of the interactive protocol, corresponding to the verifier's choices [Figs.~\ref{fig:all_combined}(c,f)], we define the relative performance $R = (p_{\textrm{exp}}-p_{\textrm{guess}})/(p_{\textrm{ideal}}-p_{\textrm{guess}})$ for each branch, where $p_{\textrm{ideal}}$ is the probability that an error-free quantum prover would pass, $p_{\textrm{guess}}$ is the probability that a random guesser would pass, and $p_{\textrm{exp}}$ is the passing rate measured in the experiment. For a perfect quantum prover $R=1$, and for a device with zero fidelity, $R=0$.

For the LWE-based protocol, there are two rounds of interaction, corresponding to the two branches, I and II shown in Fig.~\ref{fig:all_combined}(c), while for the factoring-based protocol there are three rounds of interaction [Fig.~\ref{fig:all_combined}(f)].
By comparing the relative performance between the interactive and delayed-measurement versions of our experiment, we are able to probe a subtle feature of the protocols---namely, that certain branches  are robust to additional decoherence induced by the mid-circuit measurements. 
Microscopically, this robustness arises because these branches (thick lines, Figs.~\ref{fig:all_combined}c,f) do not depend on the phase coherence between  $\ket{x_0}$ and $\ket{x_1}$.
In particular, this is true for the standard-basis measurement branches in both protocols, and also for the branches of the factoring-based protocol where the ancilla is polarized in the $Z$ basis (see supplementary information section~\ref{suppinfo:factoringproto}).
Noting that mid-circuit measurements are expected to induce mainly phase errors, one would predict that those branches insensitive to phase errors should yield similar performance in both the    interactive and delayed-measurement cases. 
This is indeed borne out by the data. %

\emph{Discussion and Outlook}---The most direct application of our protocols is to validate an untrusted quantum device. 
Expanding the notion of an “untrusted prover,” one can use interactive protocols to
test quantum mechanics itself, demonstrating quantum computational advantage. 
The “statement” to be proven in this case is that quantum mechanics operates as  expected, even when scaled to system sizes that are impossible to classically simulate. 
As aforementioned, this idea is connected to  recent sampling experiments, which have demonstrated the system sizes and fidelities necessary to make classical simulation extremely hard or impossible~\cite{Arute2019, zhong_quantum_2020, wu2021strong, zhu2021quantum, aaronson_computational_2011,  lund2017quantum, harrow2017quantum, boixo2018characterizing, bouland_complexity_2019, aaronson2019classical}; however, these approaches do not operate as proofs of quantumness since there is no method to  efficiently  verify the output.
Furthermore, practical strategies for a classical impostor to replicate the sampling are still being explored.~\cite{huang_classical_2020, pan_simulating_2021, gray_hyper-optimized_2021, pan_solving_2021, yong_closing_2021, liu_redefining_2021, gao_limitations_2021}
To this end, as proofs of quantumness, interactive protocols satisfy three important criteria: 1) efficient (polynomial-time) classical verification, 2) classical hardness based on extremely well-studied cryptographic assumptions, and 3) the potential for realization on near-term quantum devices.~\cite{brakerski2018cryptographic, kahanamoku2021classically}

Our work represents the first experimental demonstration of an interactive protocol for a proof of quantumness and opens the door to a number of intriguing directions.
First, by scaling up our proposed circuits [Fig.~\ref{fig:all_combined}(a,b)], one should be able to perform a verifiable test of quantum advantage using $\sim 1600$ qubits (see supplementary information section~\ref{suppinfo:estimates}). 
At these scales, the challenge on near-term devices will almost certainly be the circuit depth; interestingly, recent advances suggest that our interactive protocols can be performed in constant  depth at the cost of a larger number of qubits~\cite{hirahara2021test, liu2021depth}.
Second, a clear next step is to apply the power of quantum interactive protocols to achieve more than just quantum advantage---for example, pursuing such tasks as certifiable random number generation, remote state preparation and the verification of arbitrary quantum computations~\cite{brakerski2018cryptographic, alex2019computationallysecure,  mahadev2018classical}.
In another path forward, one can imagine generalizing our experiment to include interactions with a remote verifier, for example over the internet, which could serve as a loophole-free way of remotely benchmarking quantum cloud services.
Finally, the advent of mid-circuit measurement capabilities in a number of platforms~\cite{pino2021demonstration,wan2019quantum,corcoles2021exploiting,rudinger2021characterizing}, enables the  exploration of new phenomena such as  entanglement phase transitions~\cite{skinner2019measurement,li2018quantum,noel2021observation} as well as the demonstration of coherent feedback protocols including quantum error correction~\cite{ryananderson2021realization}.  

\section{Acknowledgements}

The authors are grateful to Vivian Uhlir for the design of the verifier and prover figures.
This work is supported by the ARO through the IARPA LogiQ program,  the U.S. Department of Energy, Office of Science, National Quantum Information Science Research Centers, Quantum Systems Accelerator (QSA),  the AFOSR MURIs on Quantum Measurement/Verification and Quantum Interactive Protocols (FA9550-18-1-0161) and Dissipation Engineering in Open Quantum Systems, the NSF STAQ Program, the ARO MURI on Modular Quantum Circuits, the DoE ASCR Accelerated Research in Quantum Computing program (award No. DE-SC0020312), the AFOSR YIP award number FA9550-16-1-0495, the NSF QLCI program through grant number OMA-2016245, the IQIM, an NSF Physics Frontiers Center (NSF Grant PHY-1125565), the Gordon and Betty Moore Foundation (GBMF-12500028), the Dr. Max R{\"o}ssler, the Walter Haefner Foundation and the ETH Z{\"u}rich Foundation,  the NSF award DMR-1747426, a Vannever Bush Faculty Fellowship, the Office of Advanced Scientific Computing Research, under the Accelerated Research in Quantum Computing (ARQC) program,  the A. P. Sloan foundation and the David and Lucile Packard Foundation.
\textbf{Competing interests:} C.M. is Chief Scientist for IonQ, Inc. and has a personal financial interest in the company.

\section{Supplementary Materials}

\subsection{Result data} \label{suppinfo:result-data}

In Tables~\ref{tab:chsh_result} and \ref{tab:lwe_result} we present the numerical results for each configuration of the experiment, along with the number of samples obtained ($N_A$ and $N_B$), the measure of quantumness $q$, and the statistical significance of the result (see supp. info. Section~\ref{suppinfo:significance} for a description of how the significance is calculated).

We note that for the computational Bell test protocol, the sample size $N_B$ is less than the actual number of shots that passed postselection (ultimately leading to slightly less statistical significance than might otherwise be expected).
This is because the sample size varied for different values of the verifier's string $r$, yet we are interested in the passing rate $p_B$ averaged uniformly over all $r$ (not weighted by number of shots).
To account for this, we simply took the $r$-value with the fewest number of shots, and computed $N_B$ as if every $r$ value had had that sample size (even if some values of $r$ had more).

We also note that in some cases the statistical significance denoted here may be higher than that visually displayed in Figure~\ref{fig:all_combined} of the main text; this is because the contour lines in that figure correspond to the configuration with the smallest sample size.

\begin{table}[h!]
	\centering
	{\setlength{\tabcolsep}{0.5em}
    \begin{tabular}{c|c|c|c|c|c|c|c}
       N & Measurement scheme & $p_A$ & $p_B$ & $N_A$ & $N_B$ & Quantumness $q$ & Stat. significance \\
       \hline\hline
       8 & interactive & 0.952 & 0.777 & 4096 & 15267 & 0.061 & $4.3\sigma$ \\
       8 & delayed & 0.985 & 0.837 & 2736 & 17361 & 0.334 & $24.1\sigma$ \\
       15 & delayed & 0.934 & 0.798 & 2361 & 31353 & 0.127 & $10.0\sigma$ \\
       16 & delayed & 0.927 & 0.790 & 3874 & 53550 & 0.087 & $8.8\sigma$ \\
       21 & delayed & 0.864 & 0.700 & 2066 & 27944 & -0.338 & --- \\
    \end{tabular}}
	\caption{Results for various configurations of the computational Bell test protocol. For this protocol $q = p_A + 4p_B - 4$.} \label{tab:chsh_result}
\end{table}

\begin{table}[h!]
	\centering
	{\setlength{\tabcolsep}{0.5em}
    \begin{tabular}{c|c|c|c|c|c|c|c}
       Instance & Measurement scheme & $p_A$ & $p_B$ & $N_A$ & $N_B$ & Quantumness $q$ & Stat. significance \\
       \hline\hline
       0 & interactive & 0.757 & 0.710 & 8000 & 13381 & 0.178 & $18.6\sigma$ \\
       0 & delayed & 0.793 & 0.880 & 10000 & 9415 & 0.553 & $60.3\sigma$ \\
       1 & interactive & 0.601 & 0.737 & 8000 & 7622 & 0.075 & $6.2\sigma$ \\
       1 & delayed & 0.608 & 0.803 & 8000 & 7547 & 0.215 & $18.0\sigma$ \\
       2 & interactive & 0.720 & 0.704 & 14000 & 15310 & 0.129 & $15.0\sigma$ \\
       2 & delayed & 0.730 & 0.839 & 4000 & 3735 & 0.409 & $24.6\sigma$ \\
       3 & interactive & 0.740 & 0.704 & 8000 & 15189 & 0.148 & $16.2\sigma$ \\
       3 & delayed & 0.730 & 0.772 & 8000 & 7528 & 0.274 & $23.1\sigma$ \\
    \end{tabular}}
	\caption{Results for various configurations of the LWE-based protocol. For this protocol $q = p_A + 2p_B - 2$.} \label{tab:lwe_result}
\end{table}

\subsection{Trapped Ion Quantum Computer} \label{suppinfo:ionqc}
The trapped ion quantum computer used for this study was designed, built, and operated at the University of Maryland and is described elsewhere \cite{egan2020fault, cetina2020axial}. The system consists of a chain of fifteen single $^{171}$Yb$^+$ ions confined in a Paul trap and laser cooled close to their motional ground state. Each ion provides one physical qubit in the form of a pair of states in the hyperfine-split $^2S_{1/2}$ ground level with an energy difference of 12.642821 GHz, which is insensitive to magnetic fields to first order. The qubits are collectively initialized through optical pumping, and state readout is accomplished by state-dependent flourescence detection \cite{olmschenk07}. Qubit operations are realized via pairs of Raman beams, derived from a single 355-nm mode-locked laser \cite{debnath2016demonstration}. These optical controllers consist of an array of individual addressing beams and a counter-propagating global beam that illuminates the entire chain. Single qubit gates are realized by driving resonant Rabi rotations of defined phase, amplitude, and duration. Single-qubit rotations about the z-axis, are performed classically with negligible error. Two-qubit gates are achieved by illuminating two selected ions with beat-note frequencies near motional sidebands and creating an effective Ising spin-spin interaction via transient entanglement between the two ion qubits and all modes of motion \cite{molmer99,solano99,milburn00}. To ensure that the motion is disentangled from the qubit states at the end of the interaction, we used a pulse shaping scheme by modulating the amplitude of the global beam \cite{choi2014optimal}.

\subsection{Verifier's checks} \label{suppinfo:verifier-checks}

In this section we explicitly state the checks performed by the verifier to decide whether to accept or reject the prover's responses for each run of the protocol.
We emphasize that these checks are performed on a per-shot basis, and the empirical success rates $p_A$ and $p_B$ are define as the fraction of runs (after postselection, see below) for which the verifier accepted the prover's responses.

For both protocols, the ``A" or ``standard basis" branch check is simple.
The prover has already supplied the verifier with the image value $w$; for this test the prover is expected to measure a value $x$ such that $f(x) = w$.
Thus in this case the verifier simply evaluates $f(x)$ for the prover's supplied preimage $x$ and confirms that it is equal to $w$.

For the ``B" or ``interference" measurement, the measurement scheme and verification check is different for the two protocols.
For the LWE-based protocol, the interference measurement is an $X$-basis measurement of all of the qubits holding the preimage superposition $\ket{x_0} + \ket{x_1}$.
This measurement will return a bitstring $d$ of the same length as the number of qubits in that superposition, where for each qubit, the corresponding bit of $d$ is 0 if the measurement returned the $\ket{+}$ eigenstate and 1 if the measurement returned the $\ket{-}$ eigenstate.
The verifier has previously received the value $w$ from the prover and used the trapdoor to compute $x_0$ and $x_1$; the verifier accepts the string $d$ if it satisfies the equation
\begin{equation} \label{eq:lwe-interference-check}
d \cdot x_0 = d \cdot x_1
\end{equation}
where $(\cdot)$ denotes the binary inner product, i.e. $a\cdot b = \sum_i a_i b_i \mod 2$.
It can be shown that a perfect (noise-free) measurement of the superposition $\ket{x_0} + \ket{x_1}$ will yield a string $d$ satisfying Eq.~\ref{eq:lwe-interference-check} with probability 1.

The interference measurement for the computational Bell test involves a sequence of two measurements (in addition to the first measurement of the string $w$).
The first measurement yields a bitstring $d$ as above.
After performing that measurement, the prover holds the single-qubit state $(-1)^{d \cdot x_0} \ket{r \cdot x_0} + (-1)^{d \cdot x_0} \ket{r \cdot x_0}$, where $(\cdot)$ is the binary inner product as above and $r$ is a random bitstring supplied by the verifier. 
This state is one of $\{\ket{0}, \ket{1}, \ket{+}, \ket{-}\}$, and is fully known to the verifier after receiving $d$ (via use of the trapdoor to compute $x_0$ and $x_1$).
The second measurement is of this single qubit, in an intermediate basis $Z+X$ or $Z-X$ chosen by the verifier.
For any of the four possible states, one eigenstate of the measurement basis will be measured with probability $\cos^2 (\pi/8) \approx 85\%$ (with the other having probability $\sim 15\%$), just as in a Bell test.
The verifier accepts the measurement result if it corresponds to this more-likely result; an ideal (noise-free) prover will be accepted with probability $\sim 85\%$ (see Figure~\ref{fig:all_combined} of the main text). 

\subsection{Post-selection} \label{suppinfo:postselection}
Both the factoring-based and LWE-based protocols involve post-selection on the measurement results throughout the experiment.

\begin{table}[h!tbp]
    \centering
    \begin{tabular}{c|c|c}
       Instance & Delayed Measurement & Interactive Measurement\\
       \hline\hline
       0 & 3753/4000 & 13381/14000\\
       1 & 7547/8000 & 7622/8000 \\
       2 & 3735/4000 & 15310/16000\\
       3 & 7528/8000 & 15144/16000\\
    \end{tabular}
    \caption{This table displays the fractions of runs kept during post-selection for the LWE-based protocol, in the ``interference'' measurement branch. All runs are kept for the standard basis measurement.}
    \label{tab:postselection_LWE}
\end{table}

\begin{table}[h!tbp]
    \centering
    \begin{tabular}{c|c|c|c}
        N & Interactive & Branch & Runs kept/Total \\
        \hline\hline
        8 & Yes & A & 4096/9000 \\
        8 & Yes & B, r=01 & 5093/12000 \\
        8 & Yes & B, r=10 & 5089/12000 \\
        8 & Yes & B, r=11 & 5492/12000 \\
        \hline
        8 & No & A & 2736/6000 \\
        8 & No & B, r=01 & 5787/12000 \\
        8 & No & B, r=10 & 5818/12000 \\
        8 & No & B, r=11 & 5865/12000 \\
        \hline
        15 & No & A & 2361/6000 \\
        15 & No & B, r=001 & 4636/12000 \\
        15 & No & B, r=010 & 4532/12000 \\
        15 & No & B, r=011 & 4666/12000 \\
        15 & No & B, r=100 & 4496/12000 \\
        15 & No & B, r=101 & 4727/12000 \\
        15 & No & B, r=110 & 4479/12000 \\
        15 & No & B, r=111 & 4673/12000 \\
    \end{tabular}
    \hspace{0.5cm}
    \begin{tabular}{c|c|c|c}
        N & Interactive & Branch & Runs kept/Total \\
        \hline\hline
        16 & No & A & 3874/6000 \\
        16 & No & B, r=001 & 7842/12000 \\
        16 & No & B, r=010 & 7847/12000 \\
        16 & No & B, r=011 & 7732/12000 \\
        16 & No & B, r=100 & 7936/12000 \\
        16 & No & B, r=101 & 7870/12000 \\
        16 & No & B, r=110 & 7841/12000 \\
        16 & No & B, r=111 & 7650/12000 \\
        \hline
        21 & No & A & 2066/6000 \\
        21 & No & B, r=001 & 3992/12000 \\
        21 & No & B, r=010 & 4273/12000 \\
        21 & No & B, r=011 & 4137/12000 \\
        21 & No & B, r=100 & 4182/12000 \\
        21 & No & B, r=101 & 4193/12000 \\
        21 & No & B, r=110 & 4261/12000 \\
        21 & No & B, r=111 & 4221/12000 \\
    \end{tabular}
    \caption{Fraction of runs kept during postselection for each branch of the factoring-based protocol.}
    \label{tab:postselection_X2modN}
\end{table}

For the factoring-based protocol, this post-selection is performed on the measured value of the output register $w$. Due to quantum errors in the experiment, in practice it is possible to measure a value of $w$ that does not correspond to any preimages of the TCF---that is, there do not exist $x_0, x_1$ for which $f(x_0) = f(x_1) = w$, due to noise. Because such a result would not be possible without errors, measuring such a value indicates that a quantum error has occurred\cite{kahanamoku2021classically}. Thus, we perform post-selection by discarding all runs for which the measured value $w$ does not have two pre-images.

On the other hand, for the LWE protocol, we post-select in order to satisfy the conditions for the adaptive hardcore bit property to hold, as without this property, the protocol could be susceptible to attacks. In particular, the adaptive hardcore bit property requires that the result obtained from measuring the $x$ register using the ``interference'' measurement scheme be a nonzero bitstring \cite{brakerski2018cryptographic}. Hence, we simply post-select on this condition for the LWE case. Tables~\ref{tab:postselection_LWE} and \ref{tab:postselection_X2modN} explicitly show how many runs are kept using each post selection scheme.

We note that in both cases, post-selection does not affect the soundness of the protocols.
We only require that a non-negligible fraction of runs pass post-selection (to give good statistical significance for the results). 
This is indeed the case for our experiment, as can be seen in Tables~\ref{tab:postselection_LWE},~\ref{tab:postselection_X2modN}, as well as the statistical significance of the results in Tables~\ref{tab:chsh_result},~\ref{tab:lwe_result}.

\subsection{Shuttling and Mid-circuit measurements} \label{suppinfo:shuttling}

We control the position of the ions and run the split and shuttling sequences by changing the electrostatic trapping potential in a microfabricated chip trap \cite{osti_1237003} maintained at room-temperature. We generate 40 time-dependent signals using a multi channel DAC voltage source, which controls the voltages of the 38 inner electrodes at the center of the chip and the voltages of two additional outer electrodes. Owing to the strong radial confining potential used (with secular trapping frequencies near $3$ MHz), the central electrodes' potential effect predominantly the axial trapping potential, and in turn, generate movement predominantly along the linear trap axis. To maintain the ions at a constant height above the trap surface, we simulate the electric field based on the model in Ref.~\cite{osti_1237003}, and compensate for the average variation of its perpendicular component by controlling the voltages of the outer two electrodes.

In the first sequence, we split the 15 ion chain into two sub-chains of 7 and 8 ions, and shuttle the 8-ion group to $x=0.55$~mm away from the trap center at $x=0$. We then align the 7-ion chain with the individual-addressing Raman beams for the first mid-circuit measurement.
For the LWE-based protocol, we then reverse the shuttling process and re-merge the ions to a 15-ion chain, completing the circuit and performing a final measurement. For the factoring-based protocol, we shuttle instead the 8-ion sub-chain to the trap center and the 7-ion sub-chain to $x=-0.55$~mm. We then split this chain into 5- and 3-ion sub-chains, shuttle the latter to $x=0.55$ mm, and align the 5 ions at the center with the Raman beams for additional gates and a second mid-circuit measurement. Finally, we move away the measured ions and align the 3-ion group to the center of the trap to complete the protocol. Reversing of the sequence then prepares the ions in their initial state. For each protocol, all branches use the same shuttling sequences yet differ in the qubit assignment and the realized gates. The mid-circuit measurement duration was experimentally determined prior the experiment by maximizing the average fidelity of a Ramsey experiment using single-qubit gates, approximately optimizing for the trade-off between efficient detection of each sub-chain and stray light decoherence.

To enable efficient performance of the split and shuttling sequences we numerically simulate the electrostatic potential and the motional modes of the ions that are realized in the sequences. We minimize heating of the axial motion from low-frequency electric-field noise by ensuring that the calculated lowest axial frequency does not go below $>100$ KHz. We also minimize frequent ions loss due to collisions with background gas by maintaining a calculated trap depth of at least $20$~meV for each of the sub-chains throughout the shuttling sequences.
The simulations enable efficient alignment of the sub chains with the Raman beams, taking into account the variation of the potential induced by all electrodes.

We account and correct for various systematic effects and drifts which appear in the experiment. To eliminate the effect of systematic variation of the optical phases between the individual beams on the ions, we align each ion with the same individual beam throughout the protocol. Prior the experiment, we run several calibration protocols which estimate the the electrostatic potential at the center of the trap through a Taylor series representation up to a fourth order, estimating the dominant effect of stray electric-fields on the pre-calculated potential. We then cancel the effect of these fields using the central electrodes during the alignment and split sequences, as they are most sensitive to the exact shape of the actual electrostatic potential. Additionally, we routinely measure the common-mode drift of the individually addressing optical Raman beams along the linear axis of the trap and correct for them by automatic re-positioning of the ions through variation of the potential.

During shuttling, the ions traverse  an inhomogeneous magnetic field and consequently, each ion spin acquires a shuttling-induced phase $\phi_s^{(i)}$ which depends on its realized trajectory. We calibrate this  by performing a Ramsey sequence in which each qubit is put in a superposition of $(|0\rangle_i+|1\rangle_i)/\sqrt{2}$ before shuttling, and after the shuttling apply $R_x^{(i)}(\pi/2)R_z^{(i)}(\phi)$ gates where $\phi$ is scanned between $0$ to $2\pi$. Fitting the realized fringe for each ion enables estimation of the phases $\phi_s^{(i)}$, which are corrected in the protocols by application of the inverse operation $R_z^{(i)}(-\phi_s^{(i)})$ after shuttling.

\subsection{Circuit construction of the factoring-based protocol }
\label{suppinfo:factoringproto}

In this section, we describe the procedure for generating a superposition of the claw $\ket{x_0}+\ket{x_1}$ in the factoring-based protocol, as shown in Fig.~\ref{fig:all_combined}(a) of the main text. 

This is achieved by generating \begin{align}
    \sum_{0\leq x \leq N/2} \frac{1}{\sqrt{2^{N/2}}}\ket{x}\ket{f(x)=x^2\, mod \, N}.
\end{align}
and then measuring the $y=\ket{f(x)}$ register. We calculate $f(x)$ using a unitary $U(x,y)$ to encode the function into the phase of the $y$ register and applying a Inverse Quantum Fourier Transform (QFT$^\dagger$) to extract the result.

To start, we apply Hadamard gates to all qubits to prepare a uniform superposition of all the possible bit strings for the $x$- and $y$- registers:

\begin{align}
     \sum_{0\leq x \leq N/2,0\leq y \leq N} \alpha\ket{x}\ket{y},
\end{align}\label{eq:superposition}
where $\alpha$ is the normalization factor.

Next, we evolve the state with the unitary $U(x,y)=e^{2 \pi i \frac{x^2 y}{N}}$. Since the phase has period $2\pi$, the unitary is equivalent to $U(x,y)=e^{2 \pi i \frac{x^2 y \, mod\, N}{N}}$. We now show how to efficiently implement $U(x,y)=e^{2 \pi i \frac{x^2 y}{N}}$ on the ion trap quantum computer.

First, note the multiplication in the phase can be expressed as a sum of bit-wise multiplication
\begin{align}
U(x,y)=\prod_{i,j,k} exp \left( 2\pi i \frac{2^{i+j+k}}{N}x_i x_j y_k \right).
\end{align}
This bit-wise multiplication can be expressed using Pauli operators:
\begin{align}
\prod_{i,j,k}exp \left( 2\pi i \frac{2^{i+j+k-3}}{N}(1-\sigma_z^{(i)}) (1-\sigma_z^{(j)})(1-\sigma_z^{(k)}) \right)\label{eq:u_in_pauli}
\end{align}.

We then organize the operators into three terms:
\begin{align}
U(x,y)=\prod_{i,j,k}exp \left( \alpha_{i,j,k}\sigma_z^{(i)}\sigma_z^{(j)}\sigma_z^{(k)} \right)\prod_{i,j}exp \left( \beta_{i,j}\sigma_z^{(i)}\sigma_z^{(j)} \right) \prod_{i}exp \left( \gamma_{i}\sigma_z^{(i)} \right)\label{eq:ccz_as_zzz}.
\end{align}
We use $\alpha$'s, $\beta$'s, and $\gamma$'s to represent the phases generated by these terms, which can be calculated from Eq.\ref{eq:u_in_pauli}.
The third term contains single-qubit z-rotations that are implemented efficiently as software-phase-advances. The zz-interactions in the second term, are implemented as XX-gates sandwiched between single qubit rotations. The first term includes three body zzz-interactions, which can be decomposed using zz-interactions using the following relation:
\begin{align}
exp(-\pi/4 i \sigma_y^{(i)}\sigma_y^{(j)}) exp(i \theta \sigma_x^{(j)}\sigma_x^{(k)})exp(i\pi/4 \sigma_y^{(i)}\sigma_y^{(j)})=exp(-i \theta \sigma_y^{(i)}\sigma_z^{(j)}\sigma_x^{(k)})
\end{align}

This decomposition enables efficient construction of the following cascade of zzz-interactions: 
\begin{align}
exp(-i \theta_1 \sigma_y^{(a)}\sigma_z^{(b)}\sigma_x^{(1)})exp(-i \theta_2 \sigma_y^{(a)}\sigma_z^{(b)}\sigma_x^{(2)})...exp(-i \theta_n \sigma_y^{(a)}\sigma_z^{(b)}\sigma_x^{(n)})=\\
exp(-\pi/4 i \sigma_y^{(a)}\sigma_y^{(b)}) exp(i \theta_1 \sigma_x^{(b)}\sigma_x^{(1)}).exp(i \theta_2 \sigma_x^{(b)}\sigma_x^{(2)})....exp(i \theta_n \sigma_x^{(b)}\sigma_x^{(n)})exp(i\pi/4 \sigma_y^{(a)}\sigma_y^{(b)})
\end{align},
which are efficiently implemented from the native xx-interaction and single-qubit rotations.

Using the above decomposition, we can implement the first term in Eq.~\ref{eq:ccz_as_zzz} using the circuit shown Fig.~\ref{fig:X2ModN_circuit_detail}.

\begin{figure}[htbp]
\centering
 \includegraphics[width=0.9\textwidth]{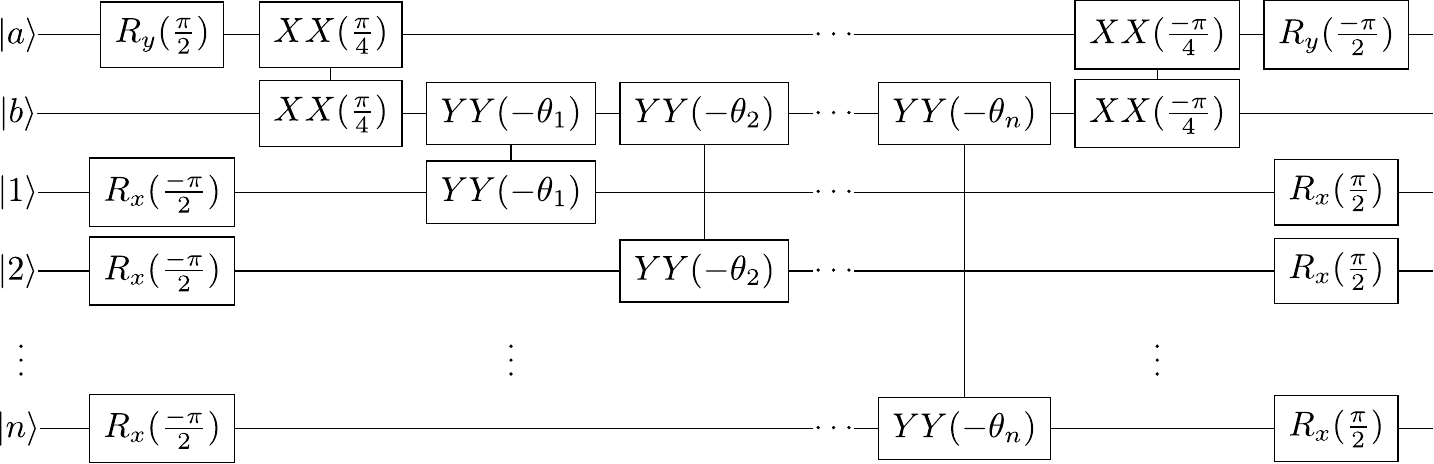}
 \caption{Circuit implementing the first term in Eq.~\ref{eq:ccz_as_zzz}.}\label{fig:X2ModN_circuit_detail}
\end{figure}

With this term implemented, we complete the construction of the full unitary U(x,y). After applying the unitary, we obtain the state
\begin{align}
     \alpha \sum_{0\leq x \leq N/2}\ket{x}\sum_{0\leq y \leq N} e^{2 \pi i \frac{x^2 y \, mod\, N}{N}} \ket{y}.
\end{align}

We then apply the inverse Quantum Fourier Transforma  $QFT^\dagger$ to the $y$-register, which gives us:
\begin{align}
     \alpha \sum_{0\leq x \leq N/2}\ket{x}\ket{y=x^2\, mod\, N}
\end{align}
Next we measure the $y$-register to find an output w, and the $x$-register contains the superposition of a colliding input pair.

The number of qubits used to represent $y$ in experiments are 3,4,4 and 5 for N=8, N=15, N=16 and N=21, respectively. The number of qubits used to represent $x$ in experiments equals the length of the $r$ string in table~\ref{tab:postselection_X2modN}.

\subsection{Circuit construction of the LWE based protocol}
\label{circ_construct_lwe}

In this section we describe the procedure for implementing the circuit $U(A, b, x, y)$, displayed in Fig.~\ref{fig:all_combined}(d).
First, let us comment on the parameters on which this unitary depends.
The matrix $A \in \mathbb{Z}^{m \times n}_q$ and vector $s \in \{0,1\}^n$ are sampled uniformly at random by the verifier\footnote{Technically, the matrix $A$ is sampled together with the TCF trapdoor. However, as explained in~\cite{brakerski2018cryptographic}, the distribution from which the matrix is sampled is statistically close to a uniform distribution over $\mathbb{Z}^{m \times n}_q$}. The vector $e \in \mathbb{Z}^m_q$ is sampled from a discrete Gaussian distribution (see Brakerski et al.~\cite{brakerski2018cryptographic} for more details on the parameter choices). The verifier constructs $y = As + e \in \mathbb{Z}^m_q$ and sends $A$ and $y$ to the prover.

Upon receiving $A$ and $y$, the prover must evaluate the function $f(b, x) = \lfloor Ax + b \cdot y \rceil$ in superposition, where $\lfloor \cdot \rceil$ denotes a rounding operation corresponding to taking the most significant bit of each component in the vector $Ax + b \cdot y$. 
It should be noted that this specific function, which uses rounding, differs from the TCF used by Brakerski et al.~\cite{brakerski2018cryptographic}, but is nevertheless still a TCF~\cite{liu2021depth}.

To perform the coherent evaluation, the prover will use three registers (for the $b$ and $x$ inputs, as well as for the output of the TCF) to create the superposition state as well as a fourth ancilla register, which will be used to perform the unitary $U(A, b, x, y)$. The prover starts by applying a layer of Hadamard gates to all input qubits and the ancilla register (that were initialized as $\ket{0}$). The resulting state will be
\begin{align}
    \sum_{b\in \{0,1\}} \sum_{x\in \mathbb{Z}_q^n} \sum_{a \in \mathbb{Z}_q} \alpha \ket{b}\ket{x}\ket{a}\ket{0}
\end{align}
for some normalization constant $\alpha$ and
where the third register is the ancilla register and the last register is the output register.
In this output register, the prover must coherently add $\lfloor Ax + b \cdot y \rceil$.
As $Ax + b \cdot y$ is an $m$-component vector, we will explain the prover's operations, at a high level, for each component of the vector. For the $i$'th component of this vector, the prover first computes the modulo $q$ inner product between the $i$'th row of $A$ and $x$ and places the result in the ancilla register. Since the prover has a classical description of $A$, this will involve a series of controlled operations between the $x$ register and the ancilla register. Similar to the factoring case, this arithmetic operation is easiest to perform in the Fourier basis, hence why the ancilla register was Hadamarded. Once the inner product has been computed, the prover will perform a controlled operation between the $b$ qubit and the ancilla register in order to add the $i$'th component of $y$.
Finally, the prover will ``copy'' the most significant bit of the result into the output register. This is done via another controlled operation. The prover then uncomputes the result in the ancilla, clearing that register. In this way, the $i$'th component of $\lfloor Ax + b \cdot y \rceil$ has been added into the output register.
Repeating this procedure for all components will yield the desired state
\begin{align}
    \sum_{b\in \{0,1\}} \sum_{x\in \mathbb{Z}_q^n} \alpha' \ket{b}\ket{x}\ket{0}\ket{\lfloor Ax + b \cdot y \rceil}
\end{align}
with normalization constant $\alpha'$.

Having given the high level description, let us now discuss in more detail the specific circuits of the current implementation. From the above analysis, we can see that the total number of qubits is $N = 1+n\log_2(q) + \log_2(q) + m$. In the instance for this experiment, we chose $m=4, n=2, q=4$, resulting in $N=11$ qubits.
The first register contains $\ket{b}$ which requires only one qubit. In the second register, the vector $x = (x_0, x_1)$ consists of two components modulo $4$, which is encoded in binary with four qubits as $\ket{x} = \ket{x_{11},x_{12},x_{21},x_{22}}$. The third register, the ancilla, is one modulo $4$ component and will thus consist of two qubits. Lastly, in the fourth register, we store the result of evaluating the function, which requires another four qubits.
As mentioned, the matrix $A$ and the vector $y$ are specified classically. In the experiment, we considered four different input configurations, corresponding to four different choices for $A$, $s$ and $e$. These choices are explicitly described later in the appendix.

\begin{figure}[htbp]
    \centering
    \includegraphics[width=0.9\textwidth]{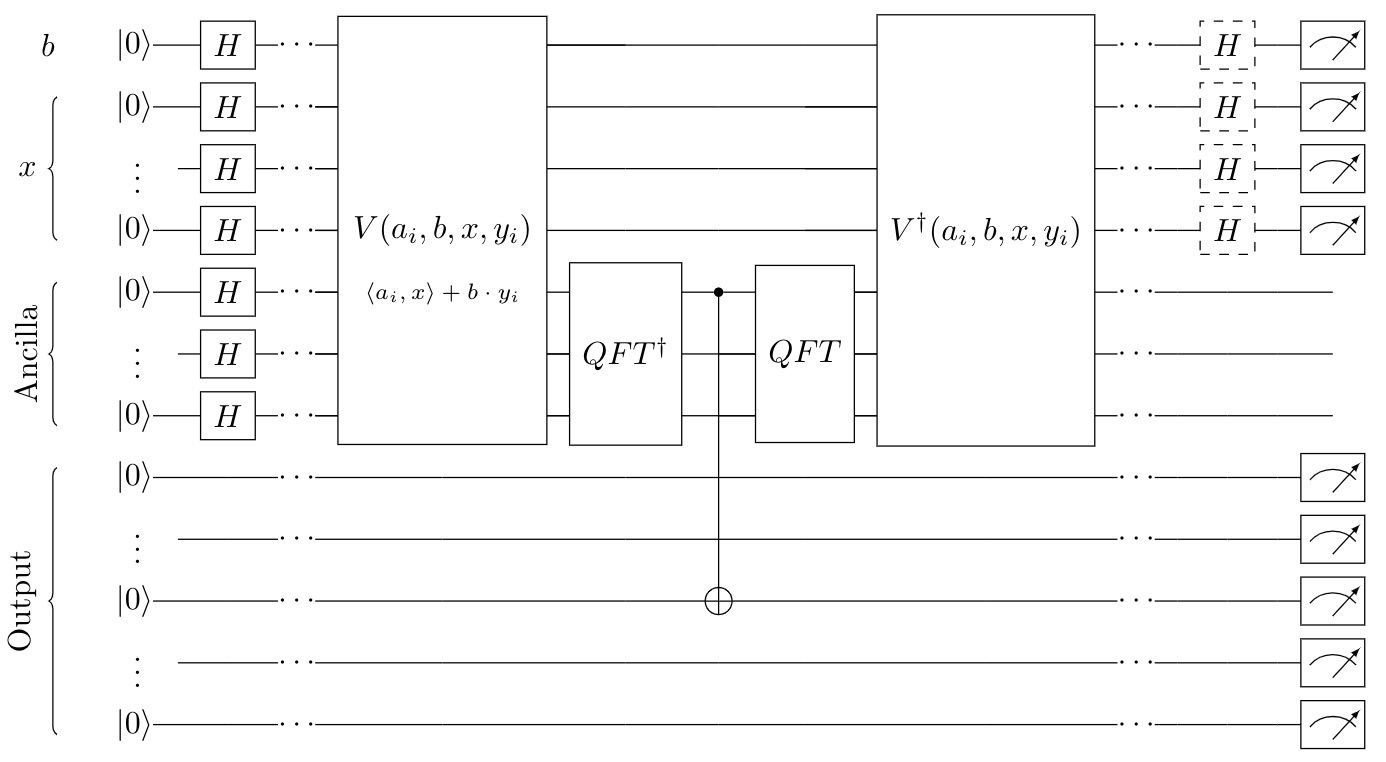}
    \caption{Circuit used to ``copy'' the most significant bit of the result from the ancilla into the output register, adding the $i$'th component of $\lfloor Ax + b \cdot y \rceil$. Here, $V$ represents the unitary used to compute $\langle a_i, x \rangle + b \cdot y_i$, in modular arithmetic, for the $i$'th row $a_i$ of the matrix $A$. Additionally, $y_i$ denotes the $i$'th entry of the vector $y = As + e$. Also, note that the target qubit of the $\mathsf{CNOT}$ in the diagram is the $i$'th qubit. The step shown is repeated for each row of $A$, indexed by $i$.}
    \label{fig:lwe_circuit_detail_msb}
\end{figure}

\begin{figure}[htbp]
    \centering
    \includegraphics[width=0.9\textwidth]{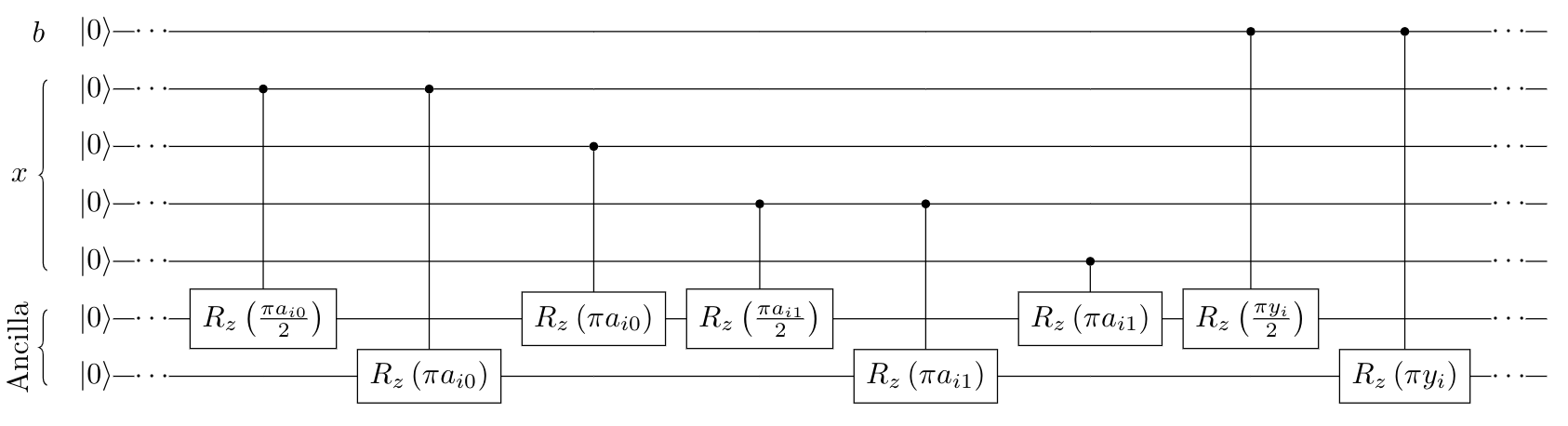}
    \caption{The explicit rotation gates used to implement the unitary $V$ from fig.\ref{fig:lwe_circuit_detail_msb} for the case of $q = 4$. Here, $a_{ij}$ denotes the entry in the $i$'th row and $j$'th column of the matrix $A$ and $y_i$ denotes the $i$'th entry of the vector $y = As + e$. The output register is omitted as there are no operations performed on it in this section of the circuit. The step shown is repeated for each row of $A$, indexed by $i$.}
    \label{fig:lwe_circuit_detail_rotations}
\end{figure}

To detail the operations implemented, as discussed previously, the prover first puts the ancilla register into the Fourier basis using the Quantum Fourier Transform (QFT). This allows them to more easily compute $\langle a_i, x\rangle + b \cdot y_i$ in the ancilla register, where $a_i$ is the $i$'th row of the matrix $A$ and $\langle \cdot, \cdot \rangle$ denotes the inner product modulo $q$. The explicit rotation gates to compute this in the Fourier basis are given in Fig.~\ref{fig:lwe_circuit_detail_rotations}. After computing this for one row $a_i$, the prover converts the ancilla back into the computational basis and ``copies'' the most significant bit stored in the ancilla register into the output register, using a $\mathsf{CNOT}$ gate, to compute the rounding function. This completes the evaluation of the function for one bit. In order to reuse the qubits in the ancilla register, the prover then reverses this computation and repeats for each row of the matrix $A$. This process of evaluating the function and reversing that computation is depicted in Fig.~\ref{fig:lwe_circuit_detail_msb}.

Finally, after completing the evaluation of the TCF, the prover measures the output register to recover the rounded result of $\lfloor Ax + b \cdot y \rceil$ for a certain value of $x$. The prover will then measure the $b$ and $x$ registers in either the $Z$ basis or $X$ basis, according to the challenge issued by the verifier. Should the verifier choose to measure in $X$ basis, the prover applies Hadamard gates on all qubits in the $b$ and $x$ registers before measuring in the computational basis.

\subsection{Instances of LWE Implemented}
\label{suppinfo:lweinstances}

\begin{table}[htbp]
    \centering
    \begin{tabular}{c|c|c|c}
       Instance & $A^\intercal$ & $e^\intercal$ & $(As + e)^\intercal$\\
       \hline\hline
       0 & $\begin{pmatrix} 0 & 2 & 0 & 1\\2 & 0 & 1 & 2\end{pmatrix}$ & $\begin{pmatrix} 0 & 1 & 0 & 0 \end{pmatrix}$ & $\begin{pmatrix} 0 & 3 & 0 & 1 \end{pmatrix}$\\
       \hline
       1 & $\begin{pmatrix} 0 & 2 & 3 & 2\\2 & 3 & 0 & 0 \end{pmatrix}$ & $\begin{pmatrix} 0 & 0 & 0 & 1\end{pmatrix}$ & $\begin{pmatrix} 0 & 2 & 3 & 3 \end{pmatrix}$\\
       \hline
       2 & $\begin{pmatrix} 2 & 0 & 0 & 1\\0 & 3 & 2 & 1\end{pmatrix}$ & $\begin{pmatrix} 1  & 0 & 1 & 0\end{pmatrix}$ & $\begin{pmatrix} 3 & 0 & 1 & 1 \end{pmatrix}$\\
       \hline
       3 & $\begin{pmatrix} 0 & 1 & 3 & 0\\3 & 0 & 0 & 2\end{pmatrix}$ & $\begin{pmatrix} 1 & 0 & 1 & 0\end{pmatrix}$ & $\begin{pmatrix} 0 & 1 & 3 & 1 \end{pmatrix}$\\
    \end{tabular}
    \caption{Details of the LWE instances. Note that the entries are transposed and for all instances we use $s^\intercal = \begin{pmatrix} 0 & 1\end{pmatrix}$.}
    \label{tab:lwe_instances}
\end{table}

Here, we explicitly detail the LWE instances that were used in the experiment. Recall that such an instance is defined by $A, s,$ and $e$, where $A \in \mathbb{Z}_q^{m\times n}$, $s \in \{0,1\}^n,$ and $e \in \mathbb{Z}_q^m$ for integers $m, n, q \in \mathbb{Z}$. In this experiment, we used $m = 4, n = 2, q = 4$ for all of the instances. Table~\ref{tab:lwe_instances} displays the explicit matrices and vectors used.

\subsection{Derivation of Quantum-Classical Threshold for LWE based Protocol} \label{suppinfo:threshold}

\begin{prop}
For any classical prover, the probabilities that they pass branches A and B, $p_A$ and $p_B$, must obey the relation
\begin{equation}
    p_A + 2 p_B - 2 < \epsilon(\lambda)
\end{equation}
where $\epsilon$ is a negligible function of the security parameter $\lambda$.
\end{prop}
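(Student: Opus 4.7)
\emph{Proof plan.} The plan is to reduce the statement to the adaptive hardcore bit property of the LWE-based trapdoor claw-free function, following the standard soundness argument from Brakerski et al.~\cite{brakerski2018cryptographic}. First, I would observe that without loss of generality the classical prover can be taken to be deterministic: any randomized strategy can be derandomized by fixing the random tape that maximizes $p_A + 2 p_B$, and the resulting bound transfers back to randomized provers by averaging over the tape.

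For a deterministic prover, each committed image $w$ determines a unique pair of responses $x(w)$ and $d(w)$ that the prover would produce on challenges A and B, respectively. The key construction is an auxiliary polynomial-time classical algorithm $\mathcal{B}$ that, on input a TCF key, simulates the prover twice on the same internal randomness to produce the committed $w$ together with both $x(w)$ and $d(w)$, and outputs the triple $(b, x_b, d)$ where $b$ is the leading bit of $x(w)$. By a union bound over the events ``branch A accepts'' and ``branch B accepts,'' the probability that $\mathcal{B}$ outputs a valid preimage of $w$ together with a nonzero $d$ satisfying $d \cdot (x_0 \oplus x_1) = 0$ is at least $p_A + p_B - 1$.

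I would then invoke the adaptive hardcore bit property: for any efficient classical algorithm, the probability of simultaneously returning a valid preimage and a nonzero $d$ in the hardcore set $H_k = \{d : d \cdot (x_0 \oplus x_1) = 0\}$ is negligibly close to the probability of returning one in the complementary set $\bar{H}_k$. Since $H_k$ and $\bar{H}_k$ partition the nonzero strings, averaging the two probabilities gives an upper bound of $\tfrac{1}{2} p_A + \mathrm{negl}(\lambda)$ on $\Pr[\mathcal{B} \text{ succeeds on both branches}]$. Chaining this with the lower bound from the previous step yields $p_A + p_B - 1 \leq \tfrac{1}{2} p_A + \mathrm{negl}(\lambda)$, which rearranges to the claimed inequality $p_A + 2 p_B - 2 \leq \epsilon(\lambda)$ for a negligible function $\epsilon$.

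The main technical obstacle is invoking the adaptive hardcore bit in exactly the form needed: the original statement is phrased as a computational indistinguishability between $H_k$ and $\bar{H}_k$, and some care is required to show that this implies the one-sided probabilistic bound used above, as well as to confirm that the combined algorithm $\mathcal{B}$ falls within the class of adversaries for which the property is proved (polynomial-time, which is immediate since $\mathcal{B}$ merely runs the prover twice). A secondary bookkeeping point is to handle the post-selection condition $d \neq 0$ consistently between the protocol's definition of $p_B$ and the formulation of the hardcore bit statement, so that the negligible error terms compose cleanly into the final $\epsilon(\lambda)$.
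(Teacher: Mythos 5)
Your reduction is correct and rests on the same key lemma as the paper---the adaptive hardcore bit property of Brakerski et al.~\cite{brakerski2018cryptographic}---but the intermediate decomposition is genuinely different. You apply the union bound directly to the two accept events, obtaining $\Pr[\text{A and B both accept}] \geq p_A + p_B - 1$, and then use the indistinguishability of $H_k$ and $\bar{H}_k$ to cap that joint probability at $\tfrac{1}{2}p_A + \mathrm{negl}(\lambda)$; the factor of $2$ on $p_B$ appears when you clear the $\tfrac{1}{2}$. The paper instead introduces the conditional passing probability $p_{B,w}$ for each committed image $w$, defines $p_{\mathrm{good}} = \Pr_w[p_{B,w} > 1/2 + \mu(\lambda)]$, proves a reverse-Markov-type bound $p_{\mathrm{good}} > 2p_B - 1$, and union-bounds ``A correct'' against ``$p_{B,w}$ large''; there the factor of $2$ enters through the Markov step, and the event ``A correct and $p_{B,w} > 1/2+\mu$'' is what gets identified with breaking the hardcore bit. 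Your route is closer to the original BCMVV soundness argument and has the advantage of making the reduction algorithm $\mathcal{B}$ explicit (derandomize, rewind, output $(b,x_b,d)$ with a fixed guess for the hardcore bit), whereas the paper leaves the reduction implicit; the paper's route avoids any discussion of derandomization and rewinding at the cost of that implicitness. The two caveats you flag---that the hardcore-bit statement restricts $d$ to a ``good'' subset that the verifier's check in Eq.~\ref{eq:lwe-interference-check} does not enforce, and the bookkeeping of the $d \neq 0$ post-selection---are real and are also glossed over in the paper's proof, so your plan is, if anything, more explicit about exactly the points where care is required.
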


\begin{proof}
We first want to find the probability that the classical prover both responds correctly for Branch A and, for the same image $w$ that they committed to the verifier, Branch B is also correct with probability greater than $1/2 + \mu(\lambda)$, where $\mu$ is a non-negligible function of the security parameter $\lambda$. Let this second probability be denoted as
\begin{equation}
    p_{\mathrm{good}} \equiv \Pr_w [p_{B,w} > 1/2 + \mu(\lambda)]
\end{equation}
By a union bound, we arrive at a bound on the desired probability
\begin{equation}
    \label{eq:initialprob}
    \Pr[A \text{ correct and } p_{B,w} > 1/2 + \mu(\lambda)] > p_A + p_{\mathrm{good}} - 1
\end{equation}
Now, we wish to write $p_{\mathrm{good}}$ in terms of $p_B$. Let $S$ be the set of $w$ values for which $p_{B, w} > 1/2 + \mu(\lambda)$. By definition, we know that with probability $p_{\mathrm{good}}$, the prover samples a $w \in S$ so that they pass the verifier's Branch B test with probability at least $1/2 + \mu(\lambda)$ and at most $1$. Similarly, we know that with probability $1 - p_{\mathrm{good}}$, the prover samples a $w \notin S$ so that they pass the verifier's Branch B test with probability at most $1/2$. Hence, overall we see that the probability that the prover passes Branch B is at most the convex mixture of these two cases, i.e.
\begin{equation}
    p_B < 1 \cdot p_{\mathrm{good}} + 0.5 \cdot (1 - p_{\mathrm{good}})
\end{equation}
Solving for $p_{\mathrm{good}}$, we then obtain
\begin{equation}
    p_{\mathrm{good}} > 2p_B - 1
\end{equation}
Substituting this into Equation \ref{eq:initialprob}, we have
\begin{equation}
    \label{eq:breakahb}
    \Pr[A \text{ correct and } p_{B,w} > 1/2 + \mu(\lambda)] > p_A + 2p_B - 2
\end{equation}
However, notice that this probability on the left hand side is the probability of breaking the adapative hardcore bit property, which we know \cite{brakerski2018cryptographic} must have
\begin{equation}
    \Pr[A \text{ correct and } p_{B,w} > 1/2 + \mu(\lambda)] < \epsilon(\lambda)
\end{equation}
where $\epsilon$ is a negligible function. Thus, combining this with Equation \ref{eq:breakahb}, we obtain the desired inequality
\begin{equation}
    p_A + 2p_B - 2 < \epsilon(\lambda)
\end{equation}
\end{proof}

\subsection{Computation of statistical significance contours}
\label{suppinfo:significance}

Here we describe the computation of the contour lines denoting various levels of statistical significance in Figure~\ref{fig:all_combined}(b,e) of the main text.
Recall the probabilities $p_A$ and $p_B$ introduced in Section~\ref{sec:threshold}, which denote a prover's probability of passing the standard basis and interference test, respectively.
Assuming the cryptographic soundness of the claw-free property of the TCF, and in the limit of large problem size, any classical cheating strategy must have true values of $p_A^c$ and $p_B^c$ that obey the bound $p_A^c + 2p_B^c - 2 < 0$ for the LWE protocol and $p_A^c + 4p_B^c - 4 < 0$ for the factoring-based protocol.
To find the statistical significance of a pair of values $p_A$ and $p_B$ measured from an (ostensibly) quantum prover, we consider the null hypothesis that the data was generated by a classical cheater (which obeys the bounds above), and compute the probability that the given data could be generated by that null hypothesis.
In particular, since the bounds above exclude a region of a two-dimensional space, we consider an infinite ``family'' of null hypotheses which lie along the boundary, and define the overall statistical significance of measuring $p_A$ and $p_B$ to be the \emph{minimum} of the statistical significances across the entire family of null hypotheses---that is, we define it as the significance with respect to the \emph{least rejected} null hypothesis.

To compute the statistical significance of a result $(p_A, p_B)$ with respect to a particular null hypothesis $(p_A^c, p_B^c)$, we define the ``quantumness'' $q$ of an experiment as $q(p_A, p_B) = p_A + 4p_B - 4$ for the factoring-based protocol and $q(p_A, p_B) = p_A + 2p_B - 2$ for the LWE protocol.
Letting $N_A$ and $N_B$ be the number of experimental runs performed for each branch respectively, we define the joint probability mass function (PMF) as the product of the PMFs of two binomial distributions $B(N_A, p_A^c)$ and $B(N_B, p_B^c)$. Mathematically the joint PMF is
\begin{equation}
f(k_A, k_B; p_A^c, p_B^c, N_A, N_B) = \binom{N_A}{k_A} \binom{N_B}{k_B} (p_A^c)^{k_A} (p_B^c)^{k_B} (1-p_A^c)^{N_A-k_A} (1-p_B^c)^{N_B-k_B}
\end{equation}
where $k_A = p_A N_A$ and $k_B = p_B N_B$ are the ``count'' of passing runs for each branch respectively.
Finally, we compute the statistical significance of a result $(p_A, p_B)$ as the probability of achieving quantumness measure of at least $q' = q(p_A, p_B)$.
Under a null hypothesis $(p_A^c, p_B^c)$, this is the sum of the PMF over all $k_A, k_B$ for which $q(k_A/N_A, k_B/N_B) > q'$.

In practice, for the contour lines of Figure~\ref{fig:all_combined}(b,e), we begin with a desired level of statistical significance (say, $5 \sigma$), and given the sample sizes $N_A$ and $N_B$ we compute the value of $q'$ that would achieve at least that significance over all null hypotheses inside the classical bound.

\subsection{Relative performance of additional instances of factoring-based protocols implemented with delayed-measurement.}

In Fig.~\ref{fig:all_combined}(f), we show the relative performance of the factoring-based protocol for $N=8$, performed both interactively and with delayed measurement. In Fig.~\ref{fig:X2ModN_Extra} we display the relative performance for $N \in \{15, 16, 21\}$ (for which experiments were run with delayed measurement only).

\begin{figure}[htbp]
\centering
 \includegraphics[width=0.7\textwidth]{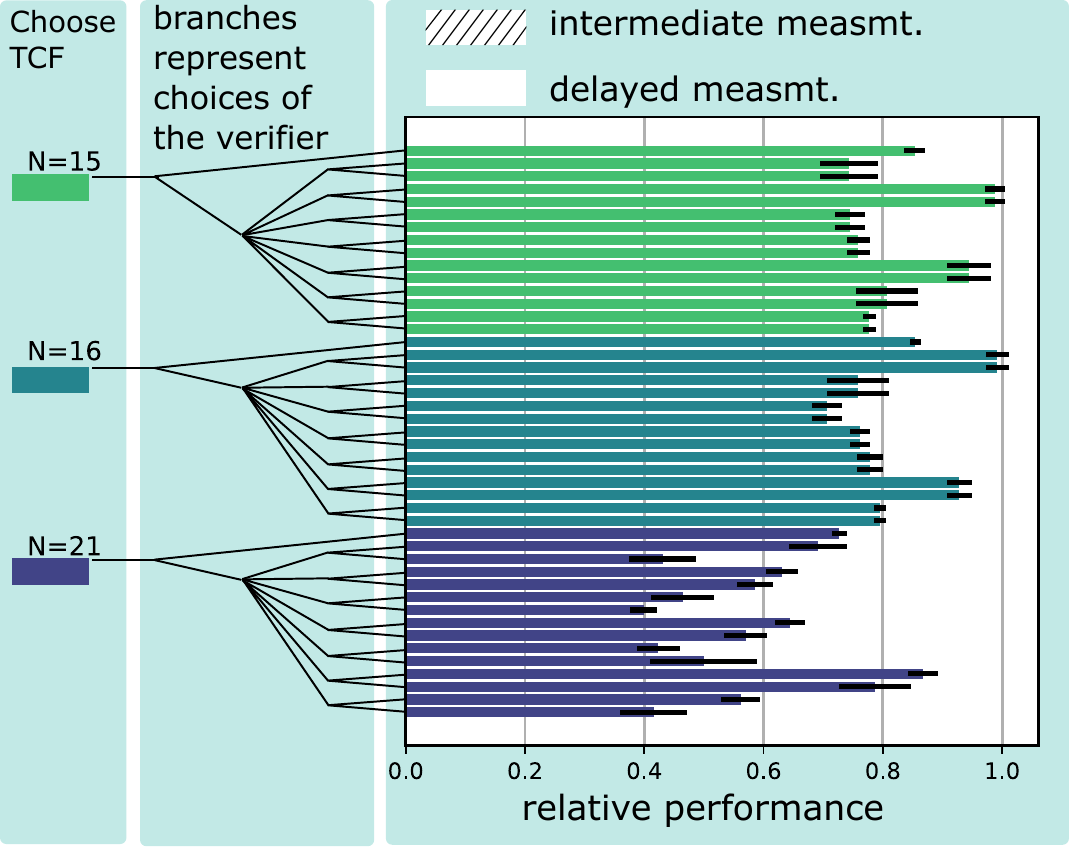}
 \caption{Extra instances of factoring-based protocols implemented with delayed measurement.}\label{fig:X2ModN_Extra}
\end{figure}

\subsection{Estimate of resources required to achieve quantum advantage}
\label{suppinfo:estimates}

For a conclusive demonstration of quantum advantage, we desire the quantum machine to perform the protocol significantly faster than the amount of time a classical supercomputer would require to break the trapdoor claw-free function---ideally, orders of magnitude faster.
To achieve this, we must set the parameters of the cryptographic problem to sufficiently large values.
A major benefit of using protocols based off of established cryptographic assumptions (like factoring and LWE) is that the classical hardness of breaking these assumptions is extremely well studied, due to the implications for security.
Thus the most straightforward way to choose parameters for our tests is to rely on publicly-available recommendations for \emph{cryptographically secure} key sizes, which are used in practice.
These parameter settings are designed to be not just slow for classical machines, but infeasible even for classical machines years from now---and thus certainly would constitute a definitive demonstration of quantum advantage.
However, setting the parameters to these values may be considered overkill for our purposes, especially since we'd like the problem size to be as small as possible in order to make the protocols maximally feasible on near term quantum devices.
With these considerations, in this section we provide two estimates for each protocol: we begin by providing estimates for smaller problem sizes that still would demonstrate some level of quantum advantage, and then give estimates based on cryptographic parameters.

We conservatively estimate that a future quantum device running the protocols investigated in this work at scale would complete the protocols on a time scale of at most hours.
Thus, to demonstrate quantum advantage by several orders of magnitude, we desire to set the parameters such that a classical supercomputer would require time on the order of thousands of hours to break the TCF.
In 2020, Boudot et al. reported the record-breaking factorization of a 795-bit semiprime~\cite{boudot_comparing_2020}.
The cost of the computation was about 1000 core-years, meaning that a 1000-core cluster would complete it in a year.
We consider this sufficient cost to demonstrate quantum advantage.
We emphasize also that factoring is one of the most well-studied hard computational problems; the record of Boudot et al. is the product of decades of algorithm development and optimization and thus it is unlikely that any innovations will drastically affect the classical hardness of factoring in the near term.
The computational Bell test protocol using a 795-bit prime could be performed using only about 800 qubits by computing and measuring the bits of the output value $w$ one-by-one; however the gate count and circuit depth can be dramatically reduced by explicitly storing the full output value $w$, requiring roughly 1600 qubits total~\cite{kahanamoku2021classically}.
Because it is so much more efficient in gate count, we use the 1600 qubit estimate as the space requirement to demonstrate quantum advantage with the computational Bell test protocol.

For LWE, estimating parameters for the same level of hardness (1000 core-years) is difficult to do exactly, because to our knowledge that amount of computational resources has never been applied to breaking an LWE instance.
However, we may make a rough estimate.
There is an online challenge (\url{https://www.latticechallenge.org/lwe_challenge/challenge.php}) intended to explore the practical classical hardness of LWE, in which users compete for who can break the largest possible instance.
As of this writing, the largest instances which have been solved use LWE vectors of about 500-1000 bits (depending on the noise level of the error vector), but the computational cost of these calculations was only of order 0.5 core-years.
To require 1000 core-years of computation time, we estimate that the LWE vectors would need to be perhaps 1000-2000 bits in length; by not explicitly storing the output vector $w$ but computing it element-by-element (similar in principle to the scheme for evaluating $x^2 \bmod N$ using only $\log(N) + 1$ qubits~\cite{kahanamoku2021classically}) it may be possible to perform the LWE protocol using a comparable number of qubits to the bit length of one LWE vector.

We now provide estimates for cryptographic parameters; that is, parameters for which it expected to be completely infeasible for a classical machine to break the trapdoor claw-free function. 
For the factoring-based protocol, we may apply NIST's recommended key sizes for the RSA cryptosystem, whose security relies on integer factorization.
NIST recommends choosing a modulus $N$ with length 2048 bits.
By using circuits optimized to conserve qubits, it is possible to evaluate the function $x^2 \bmod N$ using only $\log(N)+1$ qubits, yielding a total qubit requirement of 2049 qubits~\cite{kahanamoku2021classically}.
However, the circuit depth can be improved significantly by including more qubits; a more efficient circuit can be achieved with roughly $2\log(N) \sim 4100$ qubits. 
Because LWE is not yet broadly used in practice like RSA is, NIST does not provide recommendations for key sizes in its documentation.
However, we can use the estimates of Lindner and Peikert\cite{lindner_better_2011} to find parameters which are expected to be infeasible classically.
In Fig.~3 of that work, the authors suggest using LWE vectors in $\mathbb{Z}_q^n$ with $n=256$ and $q=4093$ for a ``medium'' level of security.
Vectors with these parameters are $n \log (q) \sim 3072$ bits long.
To store both an input and output vector would thus require roughly $\sim 6200$ qubits.
By repeatedly reusing a set of qubits to compute the output vector element-by-element the computation could be performed using roughly 3100 qubits.

\newpage


%

\end{document}